\documentclass[12pt]{article}
\usepackage{amsmath}
\usepackage{amsthm}
\usepackage{amssymb}
\usepackage{graphicx}
\usepackage{hyperref}

\title{Distribution of codewords on the faces of a hypercube and new combinatorial identities}

\author{Jamolidin K. Abdurakhmanov\\
Department of Information Technologies\\
Andijan State University\\
Andijan, Uzbekistan\\
\texttt{abduraxmanov@adu.uz, jamolidinkamol@gmail.com}}

\date{}

\newtheorem{theorem}{Theorem}
\newtheorem{lemma}{Lemma}
\newtheorem{corollary}{Corollary}
\newtheorem{definition}{Definition}

\begin{document}

\maketitle

\begin{abstract}
We present a novel framework for studying combinatorial identities through the geometric lens of subset distributions in $q$-valued cubes. By analyzing how elements of arbitrary subsets are distributed among the faces of the cube $E_q^n$, we discover new combinatorial identities with geometric significance. We prove that for any subset $A \subset E_2^n$, the rank function satisfies refined bounds that lead to exact computations for small cardinalities. Our main theorem establishes identities connecting the number of $k$-dimensional faces containing exactly $e$ elements of a subset to binomial sums over all subsets of specified cardinality. As applications, we derive a geometric interpretation of Vandermonde's identity and obtain a completely new identity for even-weight vectors. This geometric approach reveals hidden symmetries in classical combinatorial structures and provides a unified framework for generating new identities.
\end{abstract}

\section{Introduction}

The study of combinatorial identities has a rich history dating back to the fundamental works of Euler, Vandermonde, and Cauchy. In recent years, there has been renewed interest in discovering new identities through geometric and algebraic methods, as demonstrated in recent works on generalizations of classical identities \cite{Mesh25} and modern approaches using complex analysis \cite{ChakRoy24}. This paper introduces a novel geometric framework for generating combinatorial identities by analyzing how subsets of $q$-valued cubes are distributed among the various dimensional faces of the cube.

Our approach builds upon the concept of subset rank introduced in our previous work \cite{Abd25}, which measures the dimension of the smallest face containing a given subset. By studying the distribution of subset elements across faces of different dimensions, we uncover a rich class of combinatorial identities that encode geometric properties of the underlying space.

The main contributions of this paper are threefold:
\begin{enumerate}
\item We establish refined bounds for the rank function in binary cubes that enable exact computation for small subsets.
\item We prove a general theorem relating face distributions to binomial sums, yielding a parametric family of identities.
\item We demonstrate that classical identities like Vandermonde's have natural geometric interpretations in our framework, and discover completely new identities.
\end{enumerate}

Throughout this paper, we adopt the convention that binomial coefficients are defined as:
\[\binom{n}{k} = \begin{cases}
\frac{n!}{k!(n-k)!}, & \text{if } k \geq 0, \\
0, & \text{if } k < 0,
\end{cases}\]
where $n, k$ are integers and $n \geq 0$, with the standard convention $0! = 1$.

\section{Preliminaries and Notation}

We begin by recalling the basic definitions and results from \cite{Abd25} that form the foundation of our work.

\begin{definition}
The set of all $n$-dimensional vectors whose coordinates belong to $E_q = \{0, 1, \ldots, q-1\}$ is called the \emph{$n$-dimensional $q$-valued cube} and is denoted by $E_q^n$.
\end{definition}

\begin{definition}
A \emph{$k$-dimensional face} of $E_q^n$ is the set of all vectors in $E_q^n$ with some fixed $n-k$ coordinates.
\end{definition}

The vectors of a subset $A \subset E_q^n$ are also called the vertices, points, or elements of $A$. A fundamental concept in our framework is the rank of a subset:

\begin{definition}
The \emph{rank} of a subset $A \subset E_q^n$, denoted $R(A)$, is the dimension of the smallest face containing $A$.
\end{definition}

This is equivalent to the number of variable columns in the matrix $M_A$ whose rows consist of all vectors in $A$.

Two subsets $A$ and $B$ of $E_q^n$ are called \emph{isometric} if there exists a bijective mapping $\varphi: A \to B$ preserving the Hamming distance:
\[d_H(a, a') = d_H(\varphi(a), \varphi(a'))\]
for all $a, a' \in A$. We denote by $J_A$ the isometry class containing $A$.

For a subset $A$, we define $D_A$ as the sum of Hamming distances between all distinct pairs of vectors in $A$:
\[D_A = \sum_{\substack{x, y \in A \\ x \neq y}} d_H(x, y).\]

\section{Refined Bounds for Binary Cubes}

Our first main result refines the bounds for $R(A)$ established in \cite{Abd25} for the special case of binary cubes.

\begin{theorem}\label{thm:refined-bounds}
For any subset $A \subset E_2^n$ with $|A| \geq 1$ and any $B \in J_A$, the following relations hold:
\begin{enumerate}
\item[1)] $R(B) = 0$ if $|A| = 1$,
\item[2)] $\frac{4D_A}{|A|^2} \leq R(B) \leq \frac{D_A}{|A|-1}$ if $|A| > 1$ and $|A|$ is even,
\item[3)] $\frac{4D_A}{|A|^2-1} \leq R(B) \leq \frac{D_A}{|A|-1}$ if $|A| > 1$ and $|A|$ is odd.
\end{enumerate}
\end{theorem}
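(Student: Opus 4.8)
The plan is to prove the inequalities directly for $B$ using its own distance sum and cardinality, and then invoke isometry invariance. Since any isometry $\varphi\colon A\to B$ preserves all pairwise Hamming distances, we have $D_B=D_A$ and $|B|=|A|$; hence it suffices to show, for an arbitrary $C\subset E_2^n$ with $m:=|C|\ge 2$, that $\tfrac{4D_C}{m^2}\le R(C)$ (or $\tfrac{4D_C}{m^2-1}\le R(C)$ when $m$ is odd) together with $R(C)\le \tfrac{D_C}{m-1}$, and then specialize to $C=B$ and substitute the invariants. Part 1), the case $|A|=1$, is immediate: a single vector lies in a $0$-dimensional face, so $R(B)=0$.

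The central device is a column-counting identity. Writing each element of $C$ as a row of the $m\times n$ matrix $M_C$ and letting $a_j$ denote the number of $1$'s in column $j$, I would swap the order of summation in the definition of $D_C$: since in the binary case $d_H(x,y)=\sum_{j=1}^n[x_j\ne y_j]$, column $j$ contributes exactly $a_j(m-a_j)$ from the pairs that differ in that coordinate. This yields the key formula $D_C=\sum_{j=1}^n a_j(m-a_j)$. I would then recall that $R(C)$ equals the number of variable (non-constant) columns of $M_C$, i.e.\ the number of indices $j$ with $1\le a_j\le m-1$; a constant column has $a_j\in\{0,m\}$ and contributes $0$ to $D_C$.

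With this identity the two bounds reduce to per-column estimates of $a_j(m-a_j)$ over the variable columns. For the upper bound, each variable column satisfies $a_j(m-a_j)\ge 1\cdot(m-1)=m-1$, so $D_C\ge (m-1)\,R(C)$, which rearranges to $R(C)\le \tfrac{D_C}{m-1}$. For the lower bound I would maximize $a(m-a)$ over integers $a$: the maximum is $\tfrac{m^2}{4}$ when $m$ is even (attained at $a=m/2$) and $\tfrac{m^2-1}{4}$ when $m$ is odd (attained at $a=\tfrac{m\pm1}{2}$). Hence $D_C\le R(C)\cdot\tfrac{m^2}{4}$ in the even case and $D_C\le R(C)\cdot\tfrac{m^2-1}{4}$ in the odd case, and solving for $R(C)$ gives exactly the stated lower bounds $\tfrac{4D_C}{m^2}$ and $\tfrac{4D_C}{m^2-1}$. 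This is precisely where the parity of $|A|$ enters the statement.

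The main obstacle is not any single inequality but assembling the three ingredients cleanly: establishing the column identity $D_C=\sum_j a_j(m-a_j)$ with the correct normalization (matching the convention in the definition of $D_A$, i.e.\ whether each pair is counted once or twice), and carrying out the integer optimization of $a(m-a)$ with careful attention to parity so that the even and odd cases produce the denominators $m^2$ and $m^2-1$ respectively. Once the identity and the extremal values of $a_j(m-a_j)$ on variable columns are in hand, both bounds follow by summing over the $R(C)$ variable columns and dividing.
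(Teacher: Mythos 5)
Your proof is correct and takes essentially the same approach as the paper: the column-counting identity $D_C=\sum_j a_j(m-a_j)$ together with per-column extremal estimates (minimum $m-1$ on variable columns for the upper bound, maximum $\lfloor m^2/4\rfloor$ for the lower bound) is exactly the mechanism of the paper's proof, and your normalization of $D_A$ as an unordered-pair sum matches the paper's internal usage. The only difference is one of packaging: the paper invokes Theorem 3.1 of \cite{Abd25} for the upper bound and the even-case lower bound and only proves the odd-case refinement by this column argument, whereas you derive all three bounds inline, making the proof self-contained.
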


\begin{proof}
The case $|A| = 1$ is trivial. For $|A| > 1$, since isometric subsets have the same rank and distance sum, it suffices to prove the bounds for $A$ itself.

By Theorem 3.1 in \cite{Abd25}, for $q = 2$ and $|A| \geq 2$, we have:
\[\frac{4D_A}{|A|^2} \leq R(A) \leq \frac{D_A}{|A|-1}.\]

For odd $|A|$, we need to prove the stronger lower bound. Consider the matrix $M_A$ whose rows are the vectors of $A$. Let the first $R(A)$ columns be variable, with column $j$ having $x_j$ zeros and $y_j$ ones, where $x_j + y_j = |A|$. Then:
\[D_A = \sum_{j=1}^{R(A)} x_j y_j.\]

When $|A|$ is odd, to maximize $D_A$, we must have $\{x_j, y_j\} = \{(|A|-1)/2, (|A|+1)/2\}$ for each $j$. This gives:
\[D_A \leq R(A) \cdot \frac{|A|^2-1}{4},\]
which yields the desired lower bound.
\end{proof}

This theorem allows exact computation of ranks for small subsets. For $A \subset E_2^n$:
\begin{itemize}
\item If $A = \{a_1\}$, then $R(A) = 0$.
\item If $A = \{a_1, a_2\}$, then $R(A) = d_{12}$, where $d_{12} = d_H(a_1, a_2)$.
\item If $A = \{a_1, a_2, a_3\}$, then $R(A) = (d_{12} + d_{13} + d_{23})/2$, where $d_{ij} = d_H(a_i, a_j)$.
\end{itemize}

\section{Distribution of Elements in Faces}

To state our main results on face distributions, we first recall the concept of multisets. A \emph{multiset} on a finite set $U$ is the set $U$ together with a multiplicity function $r: U \to \mathbb{N}_0$. The cardinality of a multiset is $\sum_{a \in U} r(a)$. The sum of multisets $U_1, \ldots, U_l$ with multiplicity functions $r_1, \ldots, r_l$ is the multiset with multiplicity function $r(a) = r_1(a) + \cdots + r_l(a)$, denoted $\biguplus_{i=1}^l U_i$.

Let $v(A, k)$ denote the number of $k$-dimensional faces of $E_q^n$ containing the subset $A$.

\begin{lemma}\label{lem:face-count}
For any non-empty subset $A \subset E_q^n$ and integer $k$ with $0 \leq k \leq n$:
\[v(A, k) = \binom{n - R(A)}{k - R(A)}.\]
\end{lemma}

\begin{proof}
If $k < R(A)$, then $v(A, k) = 0$ and the formula holds. For $k \geq R(A)$, to specify a $k$-dimensional face containing $A$, we must include all $R(A)$ variable positions and choose $k - R(A)$ additional positions from the remaining $n - R(A)$ constant positions. The result follows.
\end{proof}

Let $V(A, k, e)$ denote the number of $k$-dimensional faces containing exactly $e$ elements of $A$.

\begin{theorem}\label{thm:main-identity}
Given a non-empty subset $A \subset E_q^n$, for any integers $k$ and $s$ with $0 \leq k \leq n$, $1 \leq s \leq p(k)$, where $p(k) = \min\{|A|, q^k\}$:
\[\sum_{e=s}^{p(k)} \binom{e}{s} V(A, k, e) = \sum_{B \subset A, |B|=s} \binom{n - R(B)}{k - R(B)},\]
where the right sum is over all $s$-element subsets $B \subset A$.
\end{theorem}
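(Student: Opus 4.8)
The plan is to prove the identity by double counting the set of incidence pairs
\[
\mathcal{S} = \{(B, F) : B \subseteq A,\ |B| = s,\ F \text{ a } k\text{-dimensional face of } E_q^n,\ B \subseteq F\}.
\]
Both sides of the claimed identity will arise as the cardinality $|\mathcal{S}|$, computed under two different orders of summation. The whole argument is therefore a single application of interchanging the order of a double count, and the work lies in identifying each side with the corresponding tally.

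First I would count $\mathcal{S}$ by fixing the subset $B$. For each $s$-element $B \subseteq A$, the number of $k$-faces $F$ with $B \subseteq F$ is exactly $v(B, k)$; since $s \geq 1$ the subset $B$ is non-empty, so Lemma~\ref{lem:face-count} applies and gives $v(B, k) = \binom{n - R(B)}{k - R(B)}$. Summing over all $s$-subsets $B$ of $A$ yields
\[
|\mathcal{S}| = \sum_{B \subset A,\ |B| = s} \binom{n - R(B)}{k - R(B)},
\]
which is precisely the right-hand side.

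Next I would count $\mathcal{S}$ by fixing the face $F$. For a fixed $k$-face $F$, the inner objects are the $s$-element subsets of $A$ contained in $F$, i.e.\ the $s$-subsets of $A \cap F$, of which there are $\binom{|A \cap F|}{s}$. Grouping the faces by the value $e = |A \cap F|$ and recalling that $V(A, k, e)$ counts the faces with $|A \cap F| = e$ gives $|\mathcal{S}| = \sum_e \binom{e}{s} V(A, k, e)$. Here $e$ ranges over $0 \leq e \leq p(k) = \min\{|A|, q^k\}$, the upper bound following from $|A \cap F| \leq |A|$ together with $|A \cap F| \leq |F| = q^k$; and since $\binom{e}{s} = 0$ whenever $e < s$ under the stated convention, the summation index may be started at $e = s$ without changing the value, recovering the left-hand side. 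Equating the two expressions for $|\mathcal{S}|$ completes the proof.

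I do not expect a substantial obstacle here: the identity is a Fubini-type double count, and its correctness hinges only on two bookkeeping points. The first is the legitimacy of invoking Lemma~\ref{lem:face-count} for each $B$, which is guaranteed by $s \geq 1$ forcing $B \neq \emptyset$. The second, and the only place genuinely requiring care, is the summation range: one must confirm that $e = |A \cap F|$ cannot exceed $p(k)$ and that every term with $e < s$ vanishes, so that the stated limits $e = s$ to $p(k)$ are exactly right.
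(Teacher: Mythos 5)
Your proof is correct and is essentially the paper's own argument: the paper forms the multiset $\biguplus_{B \subset A,\,|B|=s} \Gamma(B,k)$ and computes its cardinality two ways, which is exactly your double count of incidence pairs $(B,F)$ with $B \subseteq F$, including the same two bookkeeping points (invoking Lemma~\ref{lem:face-count} for each non-empty $B$, and the multiplicity $\binom{e}{s}$ of a face meeting $A$ in $e$ points). No gap to report.
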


\begin{proof}
Let $\Gamma(B, k)$ be the set of all $k$-dimensional faces containing $B$. Consider the multiset:
\[\Gamma' = \biguplus_{B \subset A, |B|=s} \Gamma(B, k).\]

In $\Gamma'$, each $k$-dimensional face containing exactly $e$ elements of $A$ has multiplicity $\binom{e}{s}$, so:
\[|\Gamma'| = \sum_{e=s}^{p(k)} \binom{e}{s} V(A, k, e).\]

On the other hand:
\[|\Gamma'| = \sum_{B \subset A, |B|=s} |\Gamma(B, k)| = \sum_{B \subset A, |B|=s} \binom{n - R(B)}{k - R(B)},\]
using Lemma \ref{lem:face-count}.
\end{proof}

\section{Combinatorial Identities from Geometric Structures}

Theorem \ref{thm:main-identity} provides a general framework for deriving combinatorial identities. We now present several corollaries and examples.

\begin{corollary}\label{cor:small-s}
Let $A \subset E_2^n$, $k$ be an integer with $0 \leq k \leq n$, and $p(k) = \min\{|A|, 2^k\}$. Then:
\begin{enumerate}
\item[1)] For $|A| \geq 1$:
\[\sum_{e=1}^{p(k)} e \cdot V(A, k, e) = |A| \binom{n}{k}.\]

\item[2)] For $|A| \geq 2$:
\[\sum_{e=2}^{p(k)} \binom{e}{2} V(A, k, e) = \sum_{i=1}^{|A|-1} \sum_{j=i+1}^{|A|} \binom{n - d_{ij}}{k - d_{ij}}.\]

\item[3)] For $|A| \geq 3$:
\[\sum_{e=3}^{p(k)} \binom{e}{3} V(A, k, e) = \sum_{i=1}^{|A|-2} \sum_{j=i+1}^{|A|-1} \sum_{t=j+1}^{|A|} \binom{n - d_{ijt}}{k - d_{ijt}},\]
where $d_{ijt} = (d_{ij} + d_{it} + d_{jt})/2$, where $d_{ij}$, $d_{it}$, $d_{jt}$ -- Hamming distances between corresponding elements $a_i$, $a_j$ and $a_t$ of $A$.
\end{enumerate}
\end{corollary}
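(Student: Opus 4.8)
The plan is to apply Theorem~\ref{thm:main-identity} directly with $q=2$, specializing the parameter $s$ to the three values $s=1,2,3$, and then simplifying the right-hand side in each case using the explicit rank formulas established just after Theorem~\ref{thm:refined-bounds}. Since the general identity already does the heavy lifting, each part of the corollary is essentially a matter of substituting the correct value of $s$ and computing $R(B)$ for subsets $B$ of the appropriate size.

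**First**, for part~1) I would set $s=1$. The left-hand side of Theorem~\ref{thm:main-identity} becomes $\sum_{e=1}^{p(k)} \binom{e}{1} V(A,k,e) = \sum_{e=1}^{p(k)} e\cdot V(A,k,e)$, since $\binom{e}{1}=e$. On the right, the sum runs over all singletons $B=\{a_i\}\subset A$, of which there are exactly $|A|$. For each singleton $R(B)=0$ by part~1) of Theorem~\ref{thm:refined-bounds}, so each summand is $\binom{n-0}{k-0}=\binom{n}{k}$, giving $|A|\binom{n}{k}$ as claimed.

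**Next**, for part~2) I would set $s=2$, so the left side is $\sum_{e=2}^{p(k)}\binom{e}{2}V(A,k,e)$ unchanged. The right-hand sum runs over all two-element subsets $B=\{a_i,a_j\}$ with $i<j$; I would index these by the double sum $\sum_{i=1}^{|A|-1}\sum_{j=i+1}^{|A|}$. For a two-point set, the rank is the Hamming distance, $R(\{a_i,a_j\})=d_{ij}$, as recorded after Theorem~\ref{thm:refined-bounds}, so each term is $\binom{n-d_{ij}}{k-d_{ij}}$. Part~3) proceeds identically with $s=3$: the triples $B=\{a_i,a_j,a_t\}$ are indexed by the triple sum $\sum_{i<j<t}$, and the three-point rank formula $R(B)=(d_{ij}+d_{it}+d_{jt})/2=d_{ijt}$ supplies the exponent, yielding $\binom{n-d_{ijt}}{k-d_{ijt}}$.

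**The only genuine content** beyond routine substitution is justifying the closed-form rank values for $|B|\le 3$, but these are exactly the three bullet points stated immediately after the proof of Theorem~\ref{thm:refined-bounds}, so I would simply cite them. I expect no real obstacle here; the main thing to verify carefully is bookkeeping on the index ranges of the summations (confirming that the iterated sums $\sum_{i<j}$ and $\sum_{i<j<t}$ correctly enumerate all $2$- and $3$-element subsets exactly once) and that the convention $\binom{n}{k}=0$ for negative arguments makes the terms with $d_{ij}>k$ or $d_{ijt}>k$ vanish automatically, keeping the identities consistent when some distances exceed $k$.
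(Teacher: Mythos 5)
Your proposal is correct and matches the paper's intended derivation exactly: the corollary is stated as a direct specialization of Theorem~\ref{thm:main-identity} to $s=1,2,3$, combined with the closed-form rank values $R(B)=0$, $R(B)=d_{ij}$, and $R(B)=(d_{ij}+d_{it}+d_{jt})/2$ for $|B|=1,2,3$ recorded after Theorem~\ref{thm:refined-bounds}. Your attention to the index bookkeeping and to the vanishing of terms with $d_{ij}>k$ or $d_{ijt}>k$ under the paper's binomial convention is exactly the right level of care.
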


These identities remain valid for general $q > 2$ in cases 1) and 2).

\subsection{Example: Vandermonde's Identity from Face Geometry}

Let $A \subset E_q^n$ be a $\nu$-dimensional face. The values of $V(A, k, e)$ can be computed explicitly:
\begin{itemize}
\item For $e = 0$: $V(A, k, e) = \sum_{i=0}^{\nu} q^{\nu-i} \binom{\nu}{i} \binom{n-\nu}{k-i} (q^{n-\nu-k+i} - 1)$.
\item For $e = q^i$, $i = 1, 2, \ldots, \nu$: $V(A, k, e) = q^{\nu-i} \binom{\nu}{i} \binom{n-\nu}{k-i}$.
\item For other values of $e$: $V(A, k, e) = 0$.
\end{itemize}

Substituting into Corollary \ref{cor:small-s} part 1) with $|A| = q^\nu$:
\[\sum_{i=0}^{\nu} \binom{\nu}{i} \binom{n-\nu}{k-i} = \binom{n}{k},\]
which is Vandermonde's identity.

For part 2), the number of two-element subsets $B \subset A$ with $R(B) = i$ is $\frac{1}{2}(q-1)^i \binom{\nu}{i} q^\nu$. This leads to the new identity:
\[\sum_{i=1}^{\nu} (q^i - 1) \binom{\nu}{i} \binom{n-\nu}{k-i} = \sum_{i=1}^{\nu} (q-1)^i \binom{\nu}{i} \binom{n-i}{k-i}.\]

This generalizes results on Chu-Vandermonde variations studied by Meštrović \cite{Mesh25}.

\subsection{Example: Even-Weight Vectors}

Let $A \subset E_2^n$ consist of all even-weight vectors (those with an even number of 1's). Then:
\[V(A, k, e) = \begin{cases}
2^{n-k} \binom{n}{k}, & \text{if } e = 2^{k-1}, \\
0, & \text{if } e \neq 2^{k-1}.
\end{cases}\]

From Corollary \ref{cor:small-s} part 2):
\[(2^{k-1} - 1) 2^{n-1} \binom{n}{k} = \sum_{i=1}^{\lfloor n/2 \rfloor} \binom{n}{2i} \binom{n-2i}{k-2i}.\]

This is a completely new combinatorial identity valid for all $1 \leq k \leq n$.

\section{Applications and Connections}

The geometric framework developed here has several important applications:

\subsection{Linear Codes}
When $q$ is a prime power, $E_q^n$ becomes a vector space over $\mathrm{GF}(q)$. Linear codes are subspaces, and our results provide new tools for analyzing their weight distributions. The distinction between algebraic dimension and combinatorial rank offers insights into code structure. Recent work on graph covers \cite{BFHP25} shows similar geometric approaches can yield unexpected results in combinatorial structures.

\subsection{Matroid Theory}
The rank function $R(A)$ defines a matroid on subsets of $E_q^n$. Our identities translate to statements about the characteristic polynomial and Tutte polynomial of these matroids. This connects to recent developments in matroid polynomials, such as the work by Ferroni and Larson \cite{FL24} on Kazhdan-Lusztig polynomials of braid matroids.

\subsection{Quantum Error Correction}
Binary and $q$-ary quantum codes correspond to special subsets of $E_q^n$. Our framework provides new approaches to constructing codes with prescribed distance properties.

\section{Conclusion}

We have introduced a geometric framework for studying combinatorial identities through the distribution of subset elements in faces of $q$-valued cubes. This approach reveals that many classical identities have natural geometric interpretations and enables the discovery of new identities. The connection to enumerative combinatorics, as systematically developed in Stanley's comprehensive treatment \cite{Stanley23}, suggests many further directions for exploration.

The main contributions include refined bounds for subset ranks in binary cubes, a general theorem connecting face distributions to binomial sums, and the discovery of new identities for even-weight vectors and generalizations of Vandermonde's identity.

Future directions include:
\begin{itemize}
\item Extending the framework to infinite-dimensional spaces
\item Developing algorithms for computing $V(A, k, e)$ for general subsets
\item Finding $q$-analogue versions of our identities
\item Applications to extremal combinatorics and coding theory
\end{itemize}

The geometric perspective on combinatorial identities opens new avenues for research at the intersection of combinatorics, geometry, and algebra.

\end{document}